
\documentclass[letterpaper, 10 pt, conference]{ieeeconf}  

\IEEEoverridecommandlockouts                              

\overrideIEEEmargins                                      

\title{\LARGE \bf
Risk-Budgeted Control Framework for Improved Performance and Safety in Autonomous Vehicles
}

\author{ Pei Yu Chang, Vishnu Renganathan, and Qadeer Ahmed
\thanks{This research was supported by the CARMEN+ University Transportation Center, sponsored by the U.S. Department of Transportation under Grant No. 69A3552348327. The views presented are those of the authors and do not necessarily represent the official views of the U.S. Department of Transportation.}
\thanks{Pei Yu Chang, Qadeer Ahmed are
with the Department of Mechanical and Aerospace Engineering, The Ohio
State University, Columbus, OH 43212 USA. Email:
\{chang.2314, ahmed.358\}@osu.edu}
\thanks{Vishnu Renganathan is
with the Department of Electrical and Computer Engineering, The Ohio State
University, Columbus, OH 43212 USA. Email:
renganathan.5@osu.edu}}

\usepackage{amsmath,amssymb,amsfonts,mathtools}
\usepackage{algorithm}
\usepackage{algpseudocode}
\usepackage{graphicx}
\usepackage{booktabs}
\usepackage{hyperref}
\usepackage{bm}
\usepackage{algorithm}
\usepackage{algpseudocode}
\usepackage{cite}
\usepackage{stfloats}
\usepackage[table]{xcolor} 
\usepackage{makecell} 

\newtheorem{theorem}{Theorem}
\newtheorem{lemma}{Lemma}


\setlength{\textfloatsep}{6pt plus 1pt minus 2pt}
\setlength{\dbltextfloatsep}{6pt plus 1pt minus 2pt}
\setlength{\abovecaptionskip}{3pt}
\setlength{\belowcaptionskip}{0pt}
\usepackage{graphicx}
\graphicspath{{figure/}}
\begin{document}

\maketitle
\thispagestyle{empty}
\pagestyle{empty}

\begin{abstract}

This paper presents a hybrid control framework with a risk-budgeted monitor for safety-certified autonomous driving. A sliding-window monitor tracks insufficient barrier residuals and triggers switching from a relaxed control barrier function (R-CBF) to a more conservative conditional value-at-risk CBF (CVaR-CBF) when the safety margin deteriorates. Two real-time triggers are considered: feasibility-triggered (FT), which activates CVaR-CBF when the R-CBF problem is reported infeasible, and quality-triggered (QT), which switches when the residual falls below a prescribed safety margin.

The framework is evaluated with model predictive control (MPC) under vehicle localization noise and obstacle position uncertainty across multiple AV-pedestrian interaction scenarios with 1,500 Monte Carlo runs. In the most challenging case with 5 m pedestrian detection uncertainty, the proposed method achieves a 94--96\% collision-free success rate over 300 trials while maintaining the lowest mean cross-track error (CTE = 3.2--3.6 m), indicating faster trajectory recovery after obstacle avoidance and a favorable balance between safety and performance.

\end{abstract}


\section{Introduction}

Safety under uncertainty is a fundamental requirement in autonomous vehicle (AV) control, and the challenge is amplified when safety filters are implemented as real-time optimization routines operating at high frequencies (e.g., $20$--$50$\,Hz) \cite{Renganathan2025Experimental}. Under strict computational budgets, the online solver may return (i) a feasible (near-)optimal solution, (ii) a feasible but non-optimal iterate due to early termination or imperfect convergence, or (iii) a solver-reported infeasibility outcome. In such real-time settings, reliably quantifying the degree of suboptimality---or determining whether a returned iterate remains sufficiently safe---on a step-by-step basis is generally nontrivial. Accordingly, we study safety under \emph{non-ideal} solver outcomes, encompassing both suboptimal feasible iterates and solver-reported infeasibility.

Control Barrier Functions (CBFs) formulated as quadratic programs (QPs) provide a principled mechanism for enforcing forward invariance, guaranteeing safety when the CBF-QP remains feasible and is solved with sufficient optimality \cite{ames2014cbfacc, ames2017cbfqpsafety, mrdjan2018inputdelay}. To improve deployability under input constraints and modeling complexity, prior work has considered relaxed CBFs with slack penalties \cite{xiao2022adaptivecbf}, input-aware formulations \cite{rabiee2024closedformcontrolsafetyinput, Agrawal_2021}, higher-order CBFs \cite{xiao2022hocbf}, and learning-based tuning of controller or safety parameters \cite{sabouni2025reinforcementlearningbasedrecedinghorizon, xiao2023learningfeasibilityconstraintscontrol}. While these methods mitigate conservatism and feasibility issues in deterministic settings, environmental uncertainty can still erode safety margins or induce overly conservative behavior.
\begin{figure}
      \centering
      \includegraphics[width=\columnwidth]{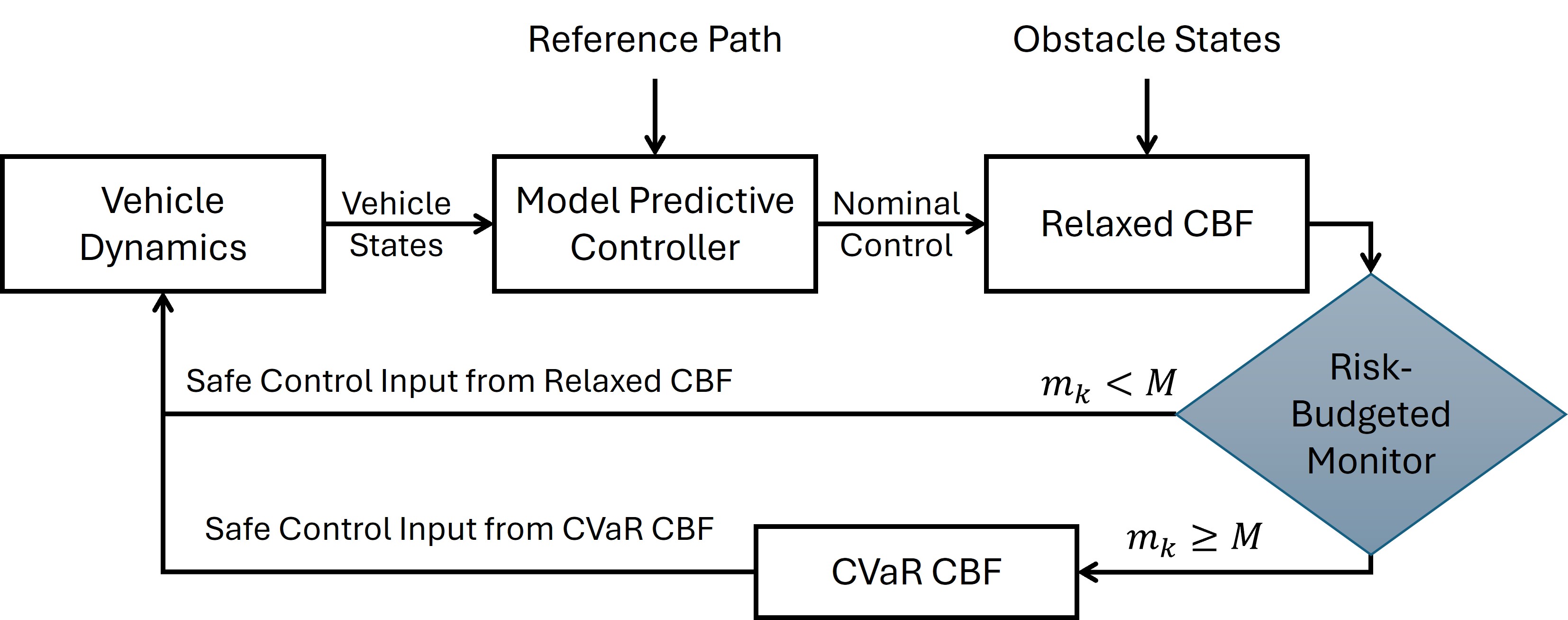}
      \caption{Proposed framework with MPC nominal control and a risk–budget monitor for safety–driven controller switching.}
      \label{framework}
\end{figure}

Under uncertainty, stochastic and CVaR-based CBF formulations explicitly address distributional tail risk \cite{clark2021cbfstochastic, ahmadi2022riskcbf, wang2025safenavigationuncertaincrowded}. These approaches regulate confidence levels to trade off feasibility and conservatism, sometimes adapting risk online to preserve probabilistic safety guarantees. However, chance-constrained and stochastic safety filters may still be overly conservative in practice and often incur higher computational cost. Although CVaR--CBF recovers the deterministic CBF case in the limit, enforcing risk-aware optimization at every step is often unnecessary for high-rate real-time deployment.


Related efforts include learning-based feasibility recovery and probabilistic barrier methods that adapt constraints or risk parameters online (e.g., \cite{ahmadi2022riskcbf}), as well as studies that provide safety guarantees under solver-induced non-optimality (e.g., early termination or imperfect convergence) \cite{Allibhoy2021, Wang_2023}. In contrast to per-step feasibility approaches, we certify safety over a finite horizon using window-level risk budgeting under occasional non-ideal solver outcomes.

This paper adopts a practical perspective: non-ideal solver outcomes do not necessarily imply immediate safety loss, but repeated occurrences can gradually erode the safety margin. Instead of evaluating safety at each step, we monitor the evolution of the safety margin over a sliding window. The proposed risk-budgeted monitor evaluates safety quality over a finite horizon, allowing occasional undesired control actions due to computational or input limitations while still certifying windowed safety.


Since monitoring alone cannot enforce safety, the monitor output is used as a switching signal in a hybrid architecture that localizes conservatism to periods of accumulated risk. The nominal mode employs a lightweight relaxed CBF-QP to preserve tracking performance, while a conservative CVaR--CBF mode is activated when the monitored safety quality degrades. The overall architecture is illustrated in Fig.~\ref{framework}.

The main contributions of this work are:
\begin{enumerate}
    \item A finite-horizon risk-budget formulation that certifies safety under transient non-optimal solutions.
    \item A monitor-driven switching mechanism between performance-oriented CBF-QP and conservative CVaR--CBF filters.
    \item Theoretical analysis and simulations demonstrating balanced safety and performance under uncertainty.
\end{enumerate}

\section{Preliminaries}

Consider a nonlinear control–affine system
\begin{equation} \label{control_affine}
\dot{\bm{x}} = f(\bm{x}) + g(\bm{x})\bm{u},
\end{equation}
where $\bm{x} \in \mathcal{D} \subset \mathbb{R}^n$ and $\bm{u} \in \mathcal{U} \subset \mathbb{R}^m$ denote the state and input, respectively.
Let the safe set be defined as
\[
\mathcal S=\{\bm{x}:h(\bm{x})\ge 0\},
\]
where $h:\mathcal D\rightarrow\mathbb R$ is a continuously differentiable function.
Safety is guaranteed if the control input satisfies the CBF condition \cite{ames2019controlbarrierfunctions}
\begin{equation}
L_f h(\bm{x})+L_g h(\bm{x})\bm{u}+\kappa h(\bm{x})\ge0 .
\label{eq:zcbf_cts}
\end{equation}

Define the safety residual
\begin{equation}
r(\bm{x},\bm{u})
:=
L_f h(\bm{x})+L_g h(\bm{x})\bm{u}+\kappa h(\bm{x}).
\label{eq:residuals}
\end{equation}

For real–time implementation, the controller is applied with zero–order hold,
$\bm u(t)=u_k$ for $t\in[t_k,t_{k+1})$.
Under this sampling scheme, the barrier dynamics satisfy
\begin{equation}
\dot h(t)\ge -\kappa h(t)+r_k,
\label{eq:cbf_timevariant}
\end{equation}
where $r_k=\inf_{t\in[t_k,t_{k+1})} r(\bm x,u_k)$.

Solving \eqref{eq:cbf_timevariant} yields the one–step comparison inequality
\begin{equation}
h_{k+1}\ge \mu h_k + c r_k,
\label{eq:disc_compare}
\end{equation}
where $\mu=e^{-\kappa T_s}\in(0,1)$ and $c=(1-\mu)/\kappa$.

\subsection{Relaxed CBF (R-CBF)}
The theory thus far yields a safe control input $u_k$ while a nominal controller input $u_k^{\text{nom}}$ is provided by the MPC in our implementation at time $k$. Thus, an optimal safe controller is implemented in a QP formulation using a CBF-based safety filter with a nonnegative slack. The slack addresses infeasibility due to actuator limitations as shown in ~\cite{ames2019controlbarrierfunctions}:
\begin{equation} \label{eq:relaxed_cbf}
\begin{aligned}
\min_{u_k \in \mathcal{U},\; \nu_k \ge 0}\quad
  & \tfrac12 \|u_k - u_k^{\text{nom}}\|^2 \;+\; \rho_\nu\, \nu_k^2 \\
\text{s.t.}\quad
  & L_f h(x_k) + L_g h(x_k)\,u_k + \kappa\, h(x_k) \;\ge\; -\,\nu_k .
\end{aligned}
\end{equation}
The slack $\nu_k>0$ quantifies the instantaneous relaxation of the barrier constraint and is penalized by $\rho_\nu>0$. A smaller penalty admits larger (more aggressive and performance-seeking) relaxations, while a larger penalty on $\nu_k$ yields more conservative (safety-biased) behavior.
In Section \ref{risk_monitor}, we demonstrate how to utilize $\nu_k$ and the residual margins $r_k$ as supervisory signals that guide a balanced control framework between performance and safety.

\subsection{Stochastic Safe Sets}

Given uncertainty in obstacle and vehicle states, safety is evaluated in a stochastic setting. Let $\bm{x}^{o,p}\in\mathcal O$ and $\bm{x}^{v,q}\in\mathcal D$ denote $P$ obstacle samples and $Q$ vehicle samples, respectively, yielding $S=PQ$ vehicle--obstacle pairs indexed by $i$. The CBF for each pair at time $k$ is
\begin{subequations}
\begin{align}
h_k^i &= \left\|{\bm{x}}_k^{v,q} - {\bm{x}}_k^{o,p}\right\|_2 - D_s, \\
h_{k+1}^{i} &\geq \mu h_{k}^{i} + c r_{k}^{i}, \\
r_k^{i}(u_k) &= L_f h_k^i + L_g h_k^i u_k + \kappa h_k^i,
\end{align}
\end{subequations}
where $i=1,\dots,S$ and $D_s$ is the minimum safe distance.

To align the goal of maximizing safety with a standard minimization problem, we define a loss function $Z_k^i(u_k)$ as the negative of the safety residual:
\begin{equation}
Z_{k+1}^i(u_k):=-r_k^{i}(u_k).
\label{eq:loss_function}
\end{equation}
This definition creates an intuitive relationship: a highly safe outcome (large positive $r_k^i$) corresponds to a very low loss (negative $Z_k^i$), while an unsafe outcome (negative $r_k^i$) results in a high, positive loss. This transformation turns the safety problem into a convex, sample-average tail control problem.
 
\subsection{Risk-Aware Safety Critical Control Formulation}

\subsubsection{Probability Constraints}


Due to uncertainty in obstacle and vehicle positions, safety is imposed through the chance constraint
\begin{equation} \label{eq:pro_Constraints}
    \mathbb{P} \left( Z_{k+1}\leq 0 \right) \geq \epsilon,
\end{equation}
where $\epsilon\in(0,1)$ is the confidence level.
Let $Z_{k+1}$ denote the random loss at the next step, with sampled realizations $\{Z_{k+1}^i\}_{i=1}^S$.
The chance constraint in \eqref{eq:pro_Constraints} can be expressed using the Value-at-Risk (VaR) measure.
\begin{equation}\label{eq:VaR_Constraint}
\text{VaR}_{\epsilon}(Z_{k+1}) := \inf_{\gamma \in \mathbb{R}} \left\{ \gamma \;\middle|\; \mathbb{P}(Z_{k+1} \leq \gamma) \geq \epsilon \right\},
\end{equation}
where $\gamma$ represents the candidate loss threshold being evaluated. 
Thus, the chance constraint is equivalent to requiring that the $\epsilon$-quantile of the loss distribution does not exceed zero.
Hence, the probabilistic safety requirement is
\begin{equation}\label{eq:VaR_final}
\text{VaR}_{\epsilon}(Z_{k+1}) \leq 0.
\end{equation}

While VaR specifies the $\epsilon$-quantile of the loss distribution, it does not account for the severity of losses in the $(1-\epsilon)$ tail. To capture this tail risk, we adopt Conditional Value-at-Risk (CVaR).
The Conditional Value-at-Risk (CVaR) at confidence level $\epsilon$ is the expected loss in the $(1-\epsilon)$ tail \cite{sergey2008cvar}:
\begin{equation}\label{eq:cvar_var}
\text{CVaR}_{\epsilon}(Z_{k+1})
:=
\mathbb{E}\!\left[ Z_{k+1} \mid Z_{k+1} \ge \text{VaR}_{\epsilon}(Z_{k+1}) \right].
\end{equation}

Using the Rockafellar--Uryasev representation \cite{rockafellar2000optimization},
\begin{equation}\label{eq:cvar_criteria}
\text{CVaR}_{\epsilon}(Z_{k+1})
=
\inf_{\gamma \in \mathbb{R}}
\left\{
\gamma + \frac{1}{1-\epsilon}\mathbb{E}\!\left[(Z_{k+1}-\gamma)_+\right]
\right\},
\end{equation}
where $(x)_+ := \max(0,x)$.
In practice, enforcing zero risk at every step may render the safety filter infeasible under uncertainty. We therefore introduce a nonnegative slack variable $\nu_k$ to quantify the admitted risk at time $k$, and define the resulting filter as the Relaxed CVaR--CBF. Given a nominal command $u_k^{\mathrm{nom}}$, the control input is obtained by solving
\begin{equation}\label{eq:cost_function}
\begin{aligned}
\min_{u_k \in \mathcal{U}, \ \nu_k \in \mathbb{R}} \quad & \frac{1}{2}\|u_k-u_k^{\mathrm{nom}}\|^2 + \rho_{\nu}\nu_k^2, \\
\text{s.t.}\quad 
& \mathrm{CVaR}_{\epsilon}\!\left(Z_{k+1}\right) \le \nu_k,\\
& r_k^i(u_k) \ge -\,\nu_k,\quad i=1,\dots,S,\\
& 0 \le \nu_k \le \overline{\nu},
\end{aligned}
\end{equation}
where $\rho_\nu>0$ penalizes admitted risk. The constraint $\mathrm{CVaR}_{\epsilon}(Z_{k+1}) \le \nu_k$ bounds the tail risk, while $r_k^i(u_k)\ge -\nu_k$ enforces a uniform per-scenario lower bound on the residual required by the sliding-window analysis in Section~\ref{risk_monitor}. The upper bound $\overline{\nu}$ specifies the stepwise risk cap, whose admissible magnitude is determined by the window-level safety certificate in Section~\ref{window_paragraph}.

Under the sample-based Rockafellar--Uryasev epigraph reformulation, \eqref{eq:cost_function} becomes a convex quadratic program.

\section{Risk-Budget Monitor}\label{risk_monitor}

As the optimization in \eqref{eq:cost_function} allows non-zero risk through relaxation, a trade-off can be achieved between performance and safety. However, under real-time computational constraints, the solver may occasionally return a non-optimal solution, such as a suboptimal convergence or a temporarily elevated risk level. Importantly, a single non-optimal step or transient increase in risk does not necessarily drive the AV into an unsafe region, particularly when a safety margin is maintained.

To manage this trade-off systematically, this section formalizes the notion of a bad step and introduces a sliding-window risk monitor to evaluate safety quality over a finite horizon.

\subsection{Bad Step detection} 
We first define $\delta>0$ as the desired per-step safety margin. A step is considered \emph{good} if the residual satisfies $r_{k}\ge \delta$, which effectively deposits a safety allowance into the system. We further assume the safety filter enforces a uniform residual lower bound $r_{k}\ge -\overline{\nu}$ at every step \eqref{eq:cost_function}. A step is classified as a \emph{bad} step if it meets the following condition:

\begin{itemize} \label{bad_step_condition}

    \item \textbf{Condition: $r_{k} < \delta$} \\
    This condition flags a step as \textbf{bad} if the residual in R-CBF fails to meet this minimum safety margin $\delta$ required for a good step. This acts as a conservative check; a step is considered bad even if the overall statistical risk is zero ($\nu_k = 0$), ensuring that any single foreseeable negative outcome is accounted for.
\end{itemize}

The bad step indicator $b_k$ at time step $k$ is defined from the conditions as:
\begin{equation} \label{eq:bad_step_indicator}
b_k=\mathbf{1}\{r_{k}<\delta\}, \quad b_k \in \{0, 1\},
\end{equation}
i.e., $b_k=1$ denotes a bad step, otherwise the step is good.

\subsection{Sliding Window Risk-Budgeted Monitor}\label{window_paragraph}

In the previous section, we defined the conditions for classifying a control action as a bad step in \eqref{eq:bad_step_indicator}. At each time step, the system applies these criteria, effectively abstracting the complexity of the many individual samples into a deterministic good or bad classification. 

This classification is tracked by a risk monitor that operates over a sliding window of length $W$, ensuring that no more than $M$ bad steps, which withdraw from the system's safety allowance within the window. However, the allowable ratio of bad steps ($M/W$) is not arbitrary; it is fundamentally linked to the good-step safety margin, $\delta$, and the maximum permissible risk, $\overline{\nu}$. This section derives the precise mathematical relationship that governs these parameters, proving the safety of the monitor.

The analysis begins with the discrete single-step inequality from \eqref{eq:disc_compare}, which is recursively applied over a window of length $W$. This process evaluates the cumulative effect of a sequence of control actions, governed by the rule that no more than $M$ bad steps can occur.

By unrolling the inequality for $W$ steps, we obtain the following expression for the lower bound of $h_{k}$:
\begin{equation} \label{eq:unrolled_sum}
    h_{k+1} \ge \mu^{W}h_{k+1-W} + c \sum_{j=0}^{W-1} \mu^{W-1-j} \tilde{r}_{k-j}.
\end{equation}
Here, $\tilde{r}_{k-j}$ is a variable representing the worst-case lower bound of the residual $r_{k-j}$ used in the proof. Thus, $\tilde{r}_{k-j}$ takes a value from the set $\{\delta, -\overline{\nu}\}$. This is because a good step is guaranteed to have a residual of at least $\delta$ (in section \ref{bad_step_condition}), while a bad step has its residual lower-bounded by $-\overline{\nu}$. To guarantee safety ($h_{k+1} \ge 0$), we must find a lower bound for this expression. The core challenge lies in finding the worst-case arrangement of the $M$ bad steps in the summation. The next lemma establishes that the worst case occurs at the end of the step. Consequently, we can apply the rearrangement lemma to determine the parameters in this worst-case configuration.


\begin{lemma}[Rearrangement Lemma]
Fix any $k$ and define the window-indexed sequence $\{\tilde{r}_j\}_{j=0}^{W-1}$ by
$\tilde{r}_j := \tilde{r}_{k-j}$ for $j=0,\dots,W-1$.
Let the weights $w_j := \mu^{W-1-j}$ for $j=0, \dots, W-1$.
Since $\mu \in (0, 1)$, these weights are monotonically increasing:
$0 < w_0 < w_1 < \dots < w_{W-1}$.
Let $V$ be the set of sequences $\{\tilde{r}_j\}_{j=0}^{W-1}$ with the worst case $M$ entries equal to
$-\overline{\nu}$ and $W-M$ entries equal to $+\delta$.
The weighted sum $S(\tilde{r}) := \sum_{j=0}^{W-1} w_j \tilde{r}_{j}$ is minimized over $\tilde{r} \in V$
by assigning the $M$ negative values $(-\overline{\nu})$ to the largest weights and the $W-M$ positive values
$(+\delta)$ to the smallest weights.
\end{lemma}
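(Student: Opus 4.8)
The plan is to establish this via a standard exchange (pairwise-swap) argument: I will show that any feasible assignment that fails to place the $M$ negative values on the largest weights can be strictly improved, so the claimed configuration is the unique minimizer. An equivalent route is to invoke the classical rearrangement inequality directly, but the swap argument is self-contained and makes the two-valued structure transparent.

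First I would fix an arbitrary feasible assignment $v \in V$ and suppose, toward proving optimality of the claimed arrangement, that there exist indices $i,j$ with $w_i > w_j$ such that a positive value sits on the larger weight, $v_i = +\delta$, while a negative value sits on the smaller weight, $v_j = -\overline{\nu}$. Let $v'$ be obtained from $v$ by swapping these two entries. Since the swap preserves the number of negative entries, $v'$ remains in $V$, so feasibility is maintained.

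Next I would compute the change in the objective, noting that only the two swapped terms differ:
\begin{equation}
S(v') - S(v) = \bigl(w_i(-\overline{\nu}) + w_j(+\delta)\bigr) - \bigl(w_i(+\delta) + w_j(-\overline{\nu})\bigr) = -(w_i - w_j)(\delta + \overline{\nu}).
\end{equation}
Because $w_i > w_j$ by assumption and $\delta, \overline{\nu} > 0$ by construction, the right-hand side is strictly negative, so every such swap strictly decreases $S$. Since there are only $W$ entries, finitely many improving swaps exhaust all ``misplaced'' pairs and drive the configuration to the one in which the $M$ values $-\overline{\nu}$ occupy the $M$ largest weights and the $W-M$ values $+\delta$ occupy the smallest; at that point no improving swap remains, so this configuration attains the minimum.

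I do not anticipate a genuine obstacle here, as this is essentially the rearrangement inequality specialized to a two-valued sequence; the only points requiring care are confirming that the swapped assignment stays feasible (unchanged count of negatives) and that the strict monotonicity $0 < w_0 < \cdots < w_{W-1}$, inherited from $\mu \in (0,1)$, is precisely what fixes the sign of the swap and hence the direction of improvement. The connection back to \eqref{eq:unrolled_sum} is then immediate: this minimizing arrangement is the worst-case placement of the $M$ bad steps, so substituting it yields the tightest lower bound on $h_k$ used to certify window-level safety.
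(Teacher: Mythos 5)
Your proof is correct and follows essentially the same pairwise-exchange argument as the paper: both compute that swapping a misplaced $(+\delta,-\overline{\nu})$ pair changes the sum by $\pm(w_i-w_j)(\delta+\overline{\nu})$ and conclude that placing the negatives on the largest weights minimizes $S$. The only cosmetic difference is direction of presentation (you start from a misplaced arrangement and improve it; the paper starts from the candidate optimum and shows any swap cannot decrease it further), plus your explicit note that the iteration terminates, which the paper leaves implicit.
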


\begin{proof}
Consider any pair of indices $i < j$, which implies $w_i \le w_j$. Let's assume an arrangement where $\tilde{r}_i = \delta$ and $\tilde{r}_j = -\overline{\nu}$. The contribution from this pair to the sum is $w_i \delta + w_j(-\overline{\nu})$.
By swapping the values: $\tilde{r}'_i = -\overline{\nu}$ and $\tilde{r}'_j = \delta$. The new contribution is $w_i(-\overline{\nu}) + w_j \delta$.
The difference between the original sum and swapped sum is:
\begin{align*}
(w_i \delta - w_j \overline{\nu}) - (-w_i \overline{\nu} + w_j \delta) &= w_i(\delta + \overline{\nu}) - w_j(\overline{\nu} + \delta) \\
&= (w_i - w_j)(\delta + \overline{\nu}).
\end{align*}
Since $w_i \le w_j$ and $\delta, \overline{\nu} \ge 0$, the difference $(w_i - w_j)(\delta + \overline{\nu})$ is less than or equal to 0. This means the sum is minimized when the smaller value ($-\overline{\nu}$) is paired with the larger weight ($w_j$). By repeatedly applying such pairwise swaps, we can prove that the minimum sum is achieved when all negative values are assigned to the largest weights.
\end{proof}

Applying the Rearrangement Lemma to the summation in Equation \eqref{eq:unrolled_sum}, we can establish the definitive worst-case lower bound for $h_{k+1}$. The minimum is achieved when the $M$ bad steps are assigned to the terms with the largest weights (corresponding to indices $j=W-M, \dots, W-1$) and the $W-M$ good steps are assigned to the terms with the smallest weights (indices $j=0, \dots, W-M-1$).

This splits the single summation into two parts. For notational clarity, we then re-index the sums. By substituting the summation index via the transformation $l = W-1-j$, the weights $\mu^{W-1-j}$ simply become $\mu^l$. This transformation maps the range for good steps ($j=0, \dots, W-M-1$) to $l=M, \dots, W-1$, and the range for bad steps ($j=W-M, \dots, W-1$) to $l=0, \dots, M-1$. This substitution yields the final lower bound:

\begin{equation}
h_{k+1} \ge \mu^{W}h_{k+1-W} + c\left[\delta\sum_{l=M}^{W-1}\mu^{l} - \overline{\nu}\sum_{l=0}^{M-1}\mu^{l}\right].
\label{eq:lower_bound_sum}
\end{equation}
This inequality represents the guaranteed system state under the worst possible sequence of actions and serves as the foundation for our main theorem.

\begin{theorem}[Window-Level Safety Certificate]\label{thm:window}
Assume the one-step comparison inequality \eqref{eq:disc_compare} holds with $0<\mu<1$ and $c>0$.
Suppose the safety filter enforces the uniform lower bound $r_t \ge -\overline{\nu}$ at every step $t$.
Fix any $k$ and consider the window of length $W$ ending at $k$, i.e., steps $t=k+1-W,\dots,k$.
Assume that at most $M$ steps in this window are \emph{bad} (i.e., satisfy $r_t<\delta$), and that the window starts from a safe state $h_{k+1-W}\ge 0$.
Then the terminal state at the end of the window is safe, i.e., $h_{k+1} \ge 0$, whenever
\begin{equation}\label{eq:window_certificate}
\mu^{M}(1-\mu^{W-M})\,\delta \ \ge\ (1-\mu^{M})\,\overline{\nu},
\end{equation}
where $\mu=e^{-\kappa T_s}$.
\end{theorem}

\begin{proof}
To ensure safety, we require that the lower bound of $h_{k+1}$ remains non-negative. Starting from the worst-case inequality established in \eqref{eq:lower_bound_sum} and conservatively dropping the $\mu^W h_{k+1-W} \ge 0$ term, the condition becomes:
\begin{equation} \label{eq:ineq_geom}
 c\left[\delta\sum_{l=M}^{W-1}\mu^{l} - \overline{\nu}\sum_{l=0}^{M-1}\mu^{l}\right] \ge 0 .
\end{equation}

Evaluating the geometric sums gives:
\begin{equation}\label{eq:geom}
\delta \frac{\mu^M(1-\mu^{W-M})}{1-\mu} \ge \overline{\nu} \frac{1-\mu^M}{1-\mu}. \end{equation}

Since $c>0$ and $0<\mu<1$ so that $(1-\mu)>0$, multiplying both sides of \eqref{eq:ineq_geom} by $(1-\mu)/c$ preserves the inequality direction and gives the certificate in \eqref{eq:window_certificate}.
\end{proof}

Rather than forbidding any instantaneous violation, the certificate caps how many bad steps and how severe they can be via $(W,M,\delta,\overline{\nu})$, thereby guaranteeing $h_{k+1}\ge 0$ at every window boundary. Once the $W, M, \delta$ has been designed based on the system requirement, we can get $\overline{\nu}$ from the equation below:

\begin{equation} \label{eq:nu_cap}
\overline{\nu} \le \frac{\mu^{M}(1-\mu^{W-M})}{1-\mu^{M}}\delta.
\end{equation}

The sliding-window certificate bounds the accumulation of bad steps and the resulting decay of the barrier value under the assumption that the system starts within the safe set, i.e., $h(x_0)\ge 0$, and that feasible control actions exist for the considered constraints.
Accordingly, the proposed monitoring mechanism is preventive. It regulates how often safety margin violations may occur so as to avoid leaving the safe set.
If the system state enters a region with $h(x)<0$ where no admissible input can drive it back to $h(x)\ge 0$ (due to actuation limits or environment geometry), then switching controllers cannot restore invariance and recovery is not guaranteed.

\section{Risk-Switch for Probabilistic Safety} \label{switch_mech}
Building on the risk monitor in Section~\ref{risk_monitor}, we formalize a sliding-window supervisor that certifies safety at the window scale and coordinates switching from the performance-seeking R-CBF filter to a conservative CVaR-CBF safety filter when risk accumulates. The cumulative number of bad steps $m_k$ in the most recent $W$-step window is updated by
\begin{equation}\label{eq:cumulate_bad}
  m_k \;=\; m_{k-1} \;+\; b_k \;-\; b_{k-W},
\end{equation}
with $m_k\in\{0,1,\dots,W\}$.

\subsection{Feasibility-Triggered CVaR-CBF (FT-C-CBF)}

To handle non-optimal outcomes in a unified way, we use the solver's feasibility status as a trigger.
At 20--50\,Hz, the solver often returns the same ``infeasible'' status for both true infeasibility and non-optimal termination, so we do not attempt to disambiguate online; instead, we treat it as a trigger and classify the step via the residual test $r_t<\delta$.

Let $\mathcal{F}_k$ denote the feasible set of the R-CBF filter at time $k$, and let
$a_k=\mathbf{1}\{(u_k,\nu_k)\in\mathcal{F}\}$ be its feasibility flag.
Under FT, feasibility serves as the primary guard, while the sliding-window counter enforces a budget on accumulated risk.
If the solver under the R-CBF constraints reports infeasible at time $k$ ($a_k=0$), the monitor updates $b_k$ and $m_k$ via \eqref{eq:cumulate_bad}.
When the solver reports infeasible at time $k$ ($a_k=0$), the monitor updates $b_k$ and $m_k$ via \eqref{eq:cumulate_bad}.
If an infeasible report occurs when the cumulative count has reached the budget (i.e., $a_k=0$ and $m_k\ge M$), the monitor switches from the R-CBF filter to the CVaR-CBF controller:
\[
\eta_k =
\begin{cases}
1, & \text{if } (a_k=0)\ \land\ (m_k \ge M),\\
0, & \text{otherwise} ,
\end{cases}
\]
where $\eta_k\in\{0,1\}$ is the controller mode ($0$: R-CBF, $1$: CVaR-CBF). The mode $\eta_k$ is implemented at time $k$.

\subsection{Quality-Triggered CVaR-CBF (QT-C-CBF)}

The quality trigger is motivated by the observation that optimality alone does not guarantee sufficient future safety margin.
Accordingly, at every step we check whether the residual satisfies the safety-margin requirement, and mark the step as bad whenever $r_k<\delta$.
Under QT, each step is vetted after solving the R-CBF constraints: the risk monitor computes $b_k$ and updates the sliding-window count $m_k$.
In contrast to the feasibility-triggered scheme, switching depends only on whether the window-level risk budget is exceeded:
\[
\eta_k =
\begin{cases}
1, & \text{if } m_k \ge M, \\
0, & \text{if } m_k < M,
\end{cases}
\]
Thus, even when the solver returns an optimal solution, a step flagged as bad by the monitor contributes to $m_k$; once the cumulative count reaches the budget $M$, the supervisor switches from the R-CBF filter to the CVaR controller.

\begin{figure*}
      \centering
       \setlength{\abovecaptionskip}{0pt}
      \includegraphics[width=\textwidth]{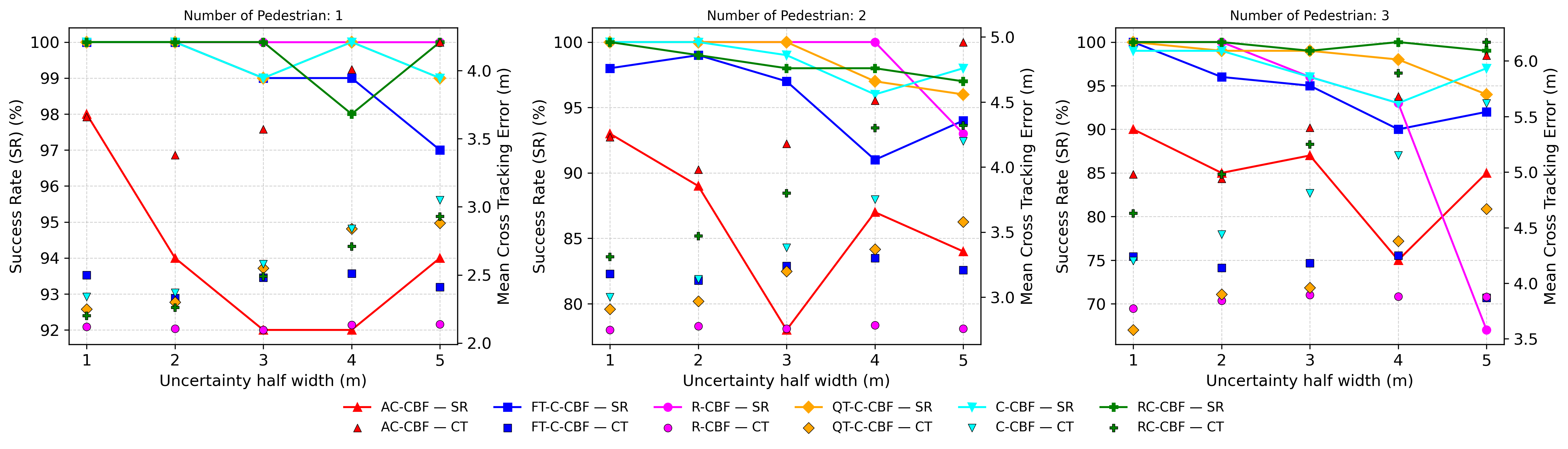}
      \caption{Success rate versus mean cross–track error (CTE) relative to the reference trajectory across 1,500 runs.}
      \label{sr_sensor}
\end{figure*}
\section{Simulations}
\subsection{Experimental Setup}
\subsubsection{Environment}


We evaluate a kinematic vehicle controlled by MPC following the experimental setup of \cite{Renganathan2025Experimental}. The reference trajectory is obtained from a GPS path recorded by the experimental AV in \cite{Renganathan2025Experimental}.

\subsubsection{Safety Framework Configuration}
The minimum safety distance is $D_s=3$ m and $\kappa=1$. The sliding-window monitor uses $W=5$, allowing at most $M=1$ bad step with a safety margin $\delta=1$ m. Based on these parameters, the risk cap computed from \eqref{eq:nu_cap} is $\overline{\nu}=3.8$.

For the CVaR constraint, the confidence level is $\epsilon=95\%$ with loss threshold $\gamma=0$, consistent with \eqref{eq:cvar_criteria}. Vehicle localization uncertainty is modeled as a Gaussian distribution with standard deviation $\sigma_v=0.1$. At each timestep, $Q=10$ samples are drawn around the measured position to represent possible true vehicle states.

\subsubsection{Obstacle Uncertainty Modeling}
To account for obstacle motion and perception uncertainty, the obstacle state is sampled at each timestep. Given the measured position $\bm{x}_{m,k}^{o}$, we generate $P=10$ samples $\bm{x}_k^{o,p}$ uniformly within a square bounding box centered at $\bm{x}_{m,k}^{o}$ with half width $\sigma_o \in \{1,2,3,4,5\}$ m. This models a perception system that reports a bounded box estimate of the obstacle location.
\begin{equation}
\begin{aligned}
\bm{x}_k^{o,p} &= \bm{x}_{m,k}^{o} + \Delta \bm{x}_k^{o,p}, \\
\Delta \bm{x}_k^{o,p} &\sim \mathcal{U}\!\left([-\sigma_o,\sigma_o]^2\right).
\end{aligned}
\end{equation}
\subsubsection{Performance Metrics}
We evaluate performance using five aggregate metrics, adapted in part from \cite{wang2025safenavigationuncertaincrowded}. Let $N_t$ denote the total number of test cases and $N_s$ the number of successful runs. Except for success rate, all metrics are averaged over successful runs only.

\begin{itemize}
    \item \text{Success Rate (SR):} A run is successful if the minimum vehicle--pedestrian distance exceeds $d_{\min}=2.8$ m throughout the trajectory. The success rate is
    $
    \mathrm{SR}=\frac{N_s}{N_t}.
    $
    
    \item \text{Minimum Distance to Pedestrian (MDP):} Let $d_{m,i}$ denote the minimum distance to the nearest pedestrian in successful run $i$. We report
    $
    \mathrm{MDP}=\frac{1}{N_s}\sum_{i=1}^{N_s} d_{m,i}.
    $
    
    A clearance of at least $5.8$ m satisfies the desired safety margin.

    \item \text{Infeasibility Rate (IR):} For successful run $i$, let $IR_i=n_i/n_t$, where $n_i$ is the number of infeasible steps and $n_t$ is the total number of optimization steps. We report
    $
    \mathrm{IR}=\frac{1}{N_s}\sum_{i=1}^{N_s} IR_i.
    $
    
    \item \text{Calculation Time (CT):} The mean per-step computation time over successful runs:
    $
    \mathrm{CT}=\frac{1}{N_s}\sum_{i=1}^{N_s} T_i.
    $
    
    \item \text{Cross-Track Error (CTE):} Let $E_i$ denote the average cross-track error in successful run $i$. We report
    $
    \mathrm{CTE}=\frac{1}{N_s}\sum_{i=1}^{N_s} E_i.
    $
\end{itemize}

\subsection{Comparative Analysis}

In the comparative evaluation, MPC serves as the nominal controller and is composed of the following safety filters:

\begin{enumerate}

\item {Relaxed CBF (R-CBF, baseline):}
We apply the relaxed CBF filter \eqref{eq:relaxed_cbf} in a plug-in manner using the raw noisy pedestrian measurements as the state input. This baseline represents a common naïve deployment of deterministic CBF filters under sensing noise, where estimation uncertainty is not modeled in the constraint. The purpose is to quantify the robustness gap caused by ignoring uncertainty.

\item {CVaR–CBF (hard tail constraint) (C-CBF):} The CVaR surrogate in \eqref{eq:cvar_criteria} and remove risk slack by fixing $\nu_k \equiv 0$ in \eqref{eq:cost_function}. This yields a strict tail-risk constraint, $E^+(u_k)\le 0$, with no per-step relaxation.

\item Adaptive CVaR--CBF (AC-CBF): This baseline adopts the risk-adaptive confidence-level idea similar to \cite{wang2025safenavigationuncertaincrowded}, but does not include the dynamic zone design or their CVaR formulation. Instead, we retain the formulation in \eqref{eq:cvar_criteria} and treat the confidence level as an adaptive parameter, selecting $\epsilon \in \{0.01,\,0.05,\,0.10,\,0.15,\,0.20,\,0.25,\,0.30,\,0.40,\,0.50\}$.

\item {Relaxed CVaR–CBF (RC-CBF):} We use the relaxed CVaR–CBF defined by \eqref{eq:cvar_criteria} and \eqref{eq:cost_function} throughout the trajectory, allowing $\nu_k \ge 0$ and no risk monitor.
\end{enumerate}

The results are evaluated with 1,500 Monte–Carlo runs under vehicle localization noise $\sigma_v=0.1$ and pedestrian position noise $\sigma_o\in\{1,2,3,4,5\}\,\mathrm{m}$. We consider three different pedestrian interaction scenarios: (i) one pedestrian performing a jaywalk across the AV’s path, (ii) two pedestrians, and (iii) three pedestrians, with 1,500 total runs. 
Figure~\ref{sr_sensor} organizes results by pedestrian number. The figure reports the detection bounding-box uncertainty; the left y-axis shows success rate (\%), and the right y-axis shows the mean cross–track error (CTE) computed over successful runs. Across uncertainty levels, the proposed framework FT and QT triggers sustain high success rates while maintaining comparatively low CTE relative to the baselines.

\begin{figure*}
  \centering
  \setlength{\abovecaptionskip}{0pt}
  \includegraphics[width=\textwidth]{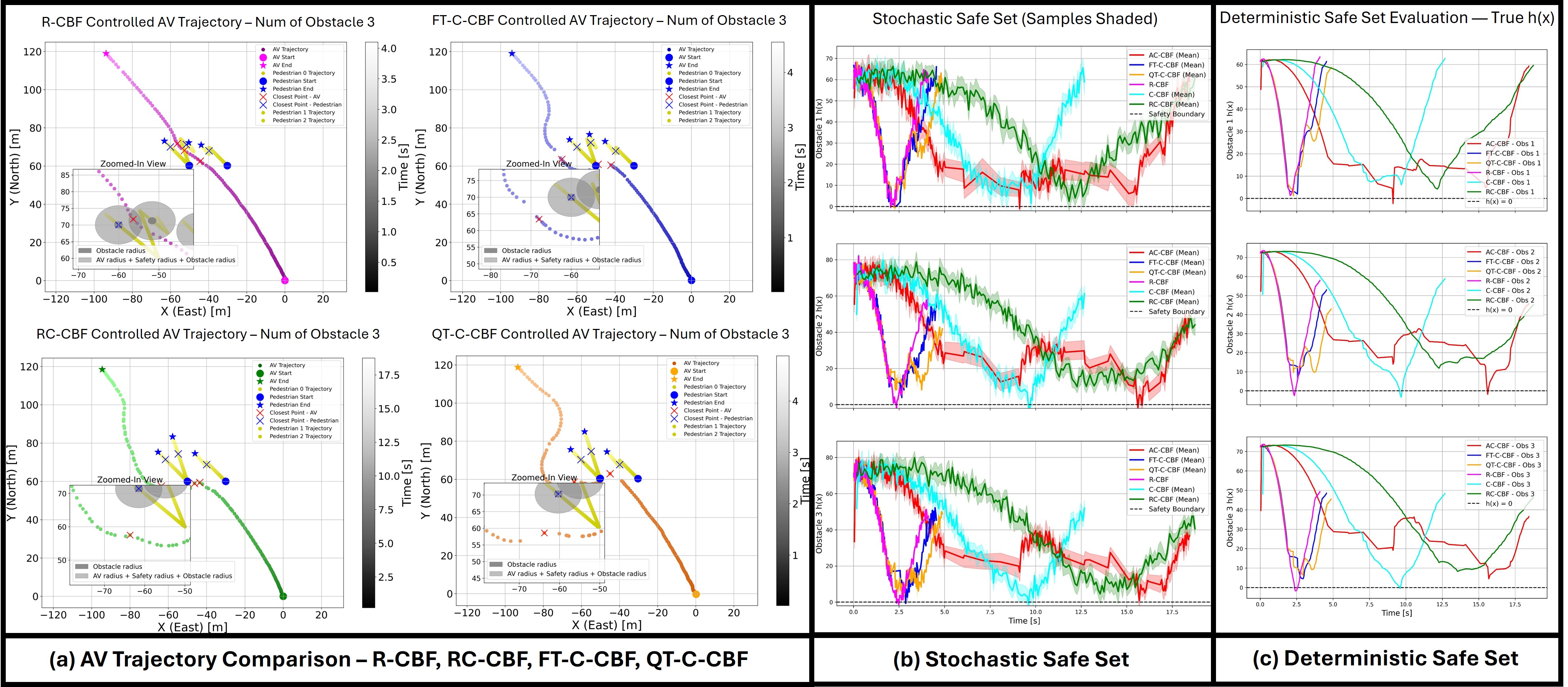}
  \caption{(a) AV trajectory comparison for R-CBF, RC-CBF, FT-C-CBF, and QT-C-CBF under $\sigma_v=0.1, \sigma_o=5$ with three pedestrians.
  (b) Stochastic safe set: sampling-based barrier values $h(x)$ with sample mean (solid) and sampled range (shaded); shading appears when CVaR control is active.
  (c) Deterministic safe set: $h(x)$ evaluated at true states of vehicle and pedestrians without sampling, highlighting clearance relative to $h(x)=0$.}
  \label{one_sce_results}
\end{figure*}

\paragraph*{Most Challenging Scenario}
The table summarizes all pedestrian interaction scenarios with $\sigma_v=0.1$ and $\sigma_o=5\,\mathrm{m}$. Table~\ref{tab:s3_area5} indicates that the conservative baselines C\text{-}CBF and RC\text{-}CBF achieve the highest success rates (SR $=98\%, 97\%$), but maintain very large minimum distances to pedestrians (MDP $=11.3, 12\,\mathrm{m}$), almost twice the required $5.8\,\mathrm{m}$. This conservatism is accompanied by larger cross-track error (CTE $=4.2, 4.3\,\mathrm{m}$) and notable computation time (CT $=56.2, 75.3\,\mathrm{ms}$). 
In contrast, the performance-based R\text{-}CBF has the lowest SR ($67\%$) with the smallest MDP ($6.3\,\mathrm{m}$) and the smallest CTE ($3.9\,\mathrm{m}$), reflecting aggressive behaviors that occasionally fails. Our design goal is to preserve C\text{-}CBF level safety while driving MDP and CTE toward the R\text{-}CBF baseline. 

Both proposed triggers, FT--C--CBF and QT--C--CBF meet that goal: they have 94--96\% success rate without colliding with pedestrian with substantially lower CT (24--26\,ms) and the lowest tracking error (CTE $=3.2$--$3.6\,\mathrm{m}$), while keeping a moderate MDP (7.1--8.3\,m). Their infeasible rates (IR $=15.8\%/16.1\%$) are comparable to the baselines (e.g., C-CBF $=15.2\%$). 
Other variants, AC-CBF (SR $=84\%$), either pay a higher computational cost or reduce task completion. Overall, FT-C-CBF and QT-C-CBF offer the best performance and safety in pedestrian interactions, approaching the performance profile of R-CBF while maintaining near C-CBF success.

Figure~\ref{one_sce_results} shows one of the test runs results, in (a) four control trajectories has been provided: R-CBF collides between the second and third pedestrians; RC-CBF yields the farthest detours; the proposed FT-C-CBF and QT-C-CBF maintain moderate clearance and recover to the reference faster than RC-CBF. Panel (b) plots $h(x)$ with sample mean (solid) and sampled range (shaded). Two effects are visible: (i) FT limits the duration of low--$h$ excursions by switching when feasibility is threatened, while QT prevents deep dips by switching earlier on quality degradation. Panel (c) plots the barrier value \(h(x)\) along the true state of the vehicle and pedestrian trajectory, reflecting the true safe boundary situation. The R-CBF drops below \(h=0\), indicating a collision event. In contrast, C-CBF (and RC-CBF) remain above the boundary safe but conservative. The FT and QT triggers skim the boundary and rapidly recover, consistent with their moderate minimum distance (MDP) and low CTE achieved by switching between R\text{-}CBF and CVaR\text{-}CBF.

\begin{table}
\centering
\caption{The comparison is under $\sigma_v=0.1, \sigma_o=5$ with three pedestrian interactions, with total 300 runs.}
\label{tab:s3_area5}
\begin{tabular}{lccccc}
\toprule
Method & SR (\%) & \makecell{MDP (m)\\($\ge$\,5.8 m)} & IR (\%) & CT (ms) & CTE (m) \\
\midrule
R-CBF & 67\% & 6.3 & 11.9 \% & 22.4 & 3.9 \\
C-CBF & \textbf{98\%} & 11.3 & 15.2\% & 56.2 & 4.2 \\
AC-CBF & 84\% & 7.33 & 17.8\% & 76.9 & 5.0 \\
RC-CBF & 97\% & 12 & 15.3\% & 75.3 & 4.3 \\
\rowcolor{gray!15}
FT-C-CBF & 94\% & \textbf{7.1} & 15.8\% & \textbf{24.2} & \textbf{3.2} \\
\rowcolor{gray!15}
QT-C-CBF & 96\% & 8.3 & 16.1\% & 25.7 & 3.6 \\
\bottomrule
\end{tabular}
\end{table}

\begin{table}
\centering
\caption{300 runs across scenarios; $\delta\in\{0.1,1,2\}$.}
\label{tab:delta_by_ctrl_012}
\begin{tabular}{lcccccc}
\toprule
 & \multicolumn{2}{c}{$\delta=0.1$} & \multicolumn{2}{c}{$\delta=1$} & \multicolumn{2}{c}{$\delta=2$} \\
\cmidrule(lr){2-3} \cmidrule(lr){4-5} \cmidrule(lr){6-7}
 & FT & QT & FT & QT & FT & QT \\
\midrule
SR & 94\% & \textbf{98\%} & 94\% & \textbf{97\%} & 93\% & \textbf{96\%} \\
MDP (m) & \textbf{7.26} & 9.71 & \textbf{7.05} & 9.04 & \textbf{7.2} & 8.8 \\
CTE (m) & \textbf{3.5} & 4.14 & \textbf{3.5} & 4.1 & \textbf{3.5} & 3.9 \\
CVaR rate & 0.92\% & 2.61\% & 0.96\% & 2.70\% & 1.04\% & 2.93\% \\
\bottomrule
\end{tabular}
\end{table}

\subsection{Delta Comparison}

With the risk window fixed at $(W,M)=(5,1)$, we sweep the safety margin parameter $\delta\in\{0.1,1,2\}$ across the three pedestrian scenarios, the risk cap is calculated using \eqref{eq:nu_cap} to be \text{$\overline{\nu}$= 0.38, 3.8, 7.6} corresponding to the $\delta$ in (Table~\ref{tab:delta_by_ctrl_012}). Because $\delta$ is coupled to the admissible risk slack $\bar{\nu}$, varying $\delta$ primarily shifts the threshold that triggers the conservative CVaR controller, without otherwise altering the nominal MPC objective. In addition to SR, MDP, and CTE, we report the CVaR rate, defined as the percentage of steps at which the CVaR controller is engaged. Consistent with design, the activation rate increases monotonically with $\delta$ for both FT and QT switching, reflecting a larger safety margin that induces earlier and more frequent activations. 

\paragraph*{FT vs.\ QT}
From Table~\ref{tab:delta_by_ctrl_012}, QT has a consistently higher SR
than FT at each tested $\delta$. This follows from the QT design: the
risk–budget monitor is enforced at every step, so trajectories that are
optimally feasible but do not meet the desired safety margin. QT therefore triggers the $\mathrm{CVaR}$–CBF more often than FT, which is reflected in the higher $\mathrm{CVaR}$ activation rates reported in Table~\ref{tab:delta_by_ctrl_012}.

\section{Conclusion}

We presented a hybrid control framework with a risk monitor to switch the safety filter between the performance CBF and conservative CVAR in autonomous driving scenarios. Monte–Carlo evaluations with localization and obstacle state uncertainty show that using the QT option in the control framework has a high success rate ($\ge\!95\%$) and maintains low CTE showing the trajectory following performance while avoiding the pedestrians. When safety quality is high, the risk monitor allow the controller in its performance mode; as the margin degrades, the monitor reliably triggers the CVaR controller to preserve safety throughout the trajectory. The proposed framework achieves a performance safety balanced with risk-budgeted guarantees and potential computation compatible with on–board control loops. As next steps, we will deploy the framework under the proposed risk-budgeted monitor on our autonomous vehicle platform for hardware–in–the–loop and on–road validation.

\bibliographystyle{IEEEtran}
\vspace{-0.5em}

\bibliography{references}

\end{document}